\documentclass[11pt]{article}

\usepackage{hyperref}
\usepackage{gnuplottex}
\usepackage{fullpage}
\usepackage{amsmath}
\usepackage{amssymb}


\newenvironment{proof}{\noindent {\bf Proof.  }}{\hfill$\Box$}
\newtheorem{theorem}{Theorem}[section]

\newtheorem{conjecture}[theorem]{Conjecture}
\newtheorem{proposition}[theorem]{Proposition}

\newtheorem{definition}[theorem]{Definition}
\def \poly { \text{\rm poly} }
\def \DTS {{\sf DTS}}
\def \R {{\mathbb R}}

\def \T {{\sf TIME}}
\def \Sig[#1] {{\sf \Sigma}_{#1} }
\def \Pie[#1] {{\sf \Pi}_{#1} }

\def \NT {{\sf NTIME}}
\def \NTIME {\NT}

\def \coNT {{\sf coNTIME}}

\def \coNTIME {\coNT}

\def \isin {\subseteq}

\def \eps {\varepsilon}

\title{Applying Practice to Theory}
\author{Ryan Williams\footnote{School of Mathematics, Institute for Advanced Study,
Princeton, NJ 08540, USA. Email: {\tt ryanw@math.ias.edu}. This material is based on work supported by NSF grants CCF-0832797 and DMS-0835373. An alternative version of this article will appear in SIGACT News.}}\date{}
\begin{document}

\maketitle


\begin{abstract} How can complexity theory and algorithms benefit from practical advances in computing? We give a short overview of some prior work using practical computing to attack problems in computational complexity and algorithms, informally describe how linear program solvers may be used to help prove new lower bounds for satisfiability, and suggest a research program for developing new understanding in circuit complexity.
\end{abstract}

\section{Introduction}

As hardware becomes steadily more powerful, computer scientists should repeatedly ask themselves {\em what can we do with all the spare computing power at our disposal?} There have been many inventive and exciting answers to this question in the form of distributed computing projects, from the study of protein folding~(Folding@Home) to the improvement of climate models~(Climateprediction.net) to the search for extraterrestrial intelligence~(SETI@Home) to the mathematically intriguing~(the Great Internet Mersenne Prime Search).

I would like to suggest that some spare computing power should be devoted towards achieving a better understanding of computation itself: solving significant open problems in theoretical computer science. I would hope that this suggestion is indisputable. Considering all that we do not understand in the theory of computing, exploiting extraneous computation for improving basic knowledge should be a priority. An attractive property of this suggestion is its potential for self-improvement. Ideally, improvements in basic knowledge can lead to more spare cycles in the future, leading to further advances in basic knowledge, and so on.

While the above looks splendid in writing, of course it is not clear how to better the causes of theory in this way. In fact this problem may be exceptionally difficult in most cases of interest. My hope in this survey to encourage readers to think more seriously about the problem. 
Proofs reliant on computer calculation have become increasingly more common in mathematics. Examples include the proof of the four color theorem by Appel and Haken~\cite{AH1,AH2} which has been greatly simplified~\cite{RSST}, Hales' proof of the Kepler Conjecture~\cite{Hales}, Hass and Schlafly's proof of the double bubble conjecture~\cite{HS}, Lam's proof that there is no finite projective plane of order 10~\cite{Lam}, and McCune's proof of the Robbins Conjecture~\cite{Robbins}. In general, computers can play a much greater role than simply discovering or verifying proofs, the end products of mathematical research. To prove theorems we need to formalize a proof system, and in doing so we may have to finagle messy details. Computers are possibly more useful in helping us discover new computational artifacts suggestive of a deeper paradigm, such as an ingeniously tiny circuit for $7\times 7$ matrix multiplication which inspires us to look for something better.

At this point I should note a distinction between two usages of computers in proofs. For convenience, I will designate them {\em infeasibly checkable} and {\em feasibly checkable}. In the former, a computer is needed to generate the proof of the theorem, and the ``proof'' is the code of some program, the trace of that program's execution, and our observation that the program output the appropriate answers. I personally have no problem with such proofs so long as they are carefully reviewed, but acknowledge that they stretch the bounds of what one typically calls a proof. In contrast, feasibly checkable proofs may require strenuous computing effort, but once obtained they can be checked by mathematicians within a reasonable time frame. Restricting ourselves to feasibly checkable proofs keeps the task of proof-finding roughly within {\sf NP}, and I would like to strongly promote this usage of computers whenever possible.

First I will survey a few ideas in this vein that have been introduced in algorithms and complexity theory. Then I will discuss some recent work using computers to find proofs of time lower bounds in restricted models and small circuits. I will not give prescriptions for solving your major open problems via distributed computing with spare desktop cycles. But I do wish that this article helps you consider the possibility.

\section{Some Prior Work}

Let me begin by saying that I cannot hope to cover all the innovative uses of computers in algorithms and complexity. My goal is to merely point out a few theory topics that I know where practical computing has made a noteworthy impact.

\subsection{Moderately Exponential Algorithms}

In the area of moderately exponential algorithms, the goal is to develop faster algorithms for solving {\sf NP}-hard problems exactly. Of course we do not expect these faster algorithms to be remotely close to polynomial time. Instead we settle for exponential algorithms that are still significant improvements over exhaustive search. For example, a $O(1.9^n \cdot \poly(m))$ time algorithm for Boolean Circuit Satisfiability on $n$ variables and $m$ gates would be very interesting, but not necessarily for any practical reason. The ability to avoid brute-force search in such a general case would be an amazing discovery in itself. Currently we have no idea how to construct such an algorithm, or even if its existence would imply something unexpected. However, we can solve the 3-SAT problem in $O(1.33^n)$ time with a randomized algorithm, by taking advantage of the structure of $3$-CNF. This is due to Iwama and Tamaki~\cite{IT} and is just the latest of a long line of results on the problem.

\subsubsection{Analyzing Exponential Algorithms}

Typically, one of the central problems in an exponential algorithms paper is to prove good upper bounds on the running time of some algorithm which solves some hard problem. (Usually the correctness of the algorithm is straightforward.) Much effort has been undertaken to better understand the behavior of backtracking algorithms. To aid the discussion, let us work with a toy example. For a node $v$ in a graph, let $N(v)$ denote the set of $v$'s neighbors. Consider the following algorithm for solving Minimum Vertex Cover:

\medskip

{\narrower

{\em If all nodes have degree at most two, solve in polytime. If a node has degree $0$, remove it from the graph and recurse on the remaining graph. If a node has degree $1$, then remove it and its neighbor $u$, recurse on the resulting graph getting a cover $C$, and return $C \cup \{u\}$. Otherwise take a node $v$ of highest degree. Recurse on the graph with $v$ removed, getting a cover $C_1$. Recurse on the graph with $v$ and $N(v)$ removed, getting a cover $C_2$. Return the minimum of $C_1 \cup \{v\}$ and $C_2 \cup N(v)$.}

}

\medskip

We won't explain how to solve the degree-two case here, but leave it to the reader. How do we analyze the runtime of such an algorithm? A natural approach is to try writing a recurrence. Let $T(n)$ be the runtime on an $n$ node graph. In the two recursive calls of the algorithm, we remove at least one node and at least four nodes, respectively. (We remove at least four nodes because $|N(v)|\geq 3$.) This gives us the recurrence \[T(n) \leq T(n-1) + T(n-4) + \poly(n).\] Solving this recurrence in the usual way (by finding a real root of $f(x) = 1-1/x-1/x^4$) we find that $T(n) \leq 1.39^n$. While exponential, this is still better than the obvious $2^n$ algorithm, since we can now handle instances of more than double the size in the same running time.

Our analysis certainly has much slack. One can imagine cases where the number of nodes removed increases by much more, and we have not taken them into account. More generally, it is not clear that $n$ is the best ``progress measure'' for the algorithm. Perhaps if we count the number of edges in the graph instead, we may find a better runtime bound for sparse graphs. Indeed, in the first recursive call, at least three edges are removed (since we chose a $v$ of largest degree) and in the second call we remove at least five edges: at least three edges for the neighbors of $v$, and at least two additional edges since the neighbors have degree at least two. We have \[T(m) \leq T(m-3) + T(m-5) + O(\poly(n)),\] leading to $T(m) \leq O(1.19^m)$. For sufficiently sparse graphs, this improves $1.39^n$. More ambitiously, we could try to capture both observations with the double recurrence \[T(m,n) \leq T(m - 3,n-1) + T(m - 5,n-4) + O(\poly(n)).\] Now how do we deal with this? One way is to convert the double recurrence into a single one, by letting $k = \alpha_1 m + \alpha_2 n$. Then \[T(k) \leq T(k - 3\alpha_1 - \alpha_2) + T(k - 5\alpha_1 - 4\alpha_2) + O(\poly(k)).\] So $T(k) \leq O(c^k)$ where $c$ is a real solution to $1-1/x^{3\alpha_1 + \alpha_2}-1/x^{5\alpha_1 + 3\alpha_2} = 0$. For an example when $\alpha_1=1/2$ and $\alpha_2=1$ then the runtime bound is $O(1.21^{m/2+n})$. In this way, we can interpolate between the two time bounds.

In general, optimizing this sort of analysis can become terribly complicated. When there are many possible cases in the algorithm, and different variable measures are decreasing at different rates, the analysis becomes intractable to carry out by hand. However, researchers have found ways to apply computers to the problem. Eppstein~\cite{Eppstein} showed that multivariate recurrences similar to the above can be approximately solved efficiently, by expressing the problem as a {\em quasi-convex program}. This has become a very useful tool. For example, we could keep track of the number of nodes $n_i$ of degree $i$ in its time recurrence expressions, for $i \geq 2$ (and for sufficiently large $k$, we lump the number of degree $\geq k$ nodes into a single quantity $n_{\geq k}$). This can also be converted into a single variable recurrence, introducing $\alpha_i$ weights for each $n_i$.

For some algorithms we can get surprisingly good time bounds in terms of the total number of nodes: quasiconvex optimization uncovers interesting $\alpha_i$'s. Intuitively, this makes sense, because removing high degree nodes should contribute more to the progress of the algorithm than removing those of low degree. Fomin, Grandoni, and Kratsch~\cite{FGK1,FGK2} have found $O(1.52^n)$ and $O(1.23^n)$ algorithms for Dominating Set and Maximum Independent Set/Minimum Vertex Cover respectively, by performing analyses of the above kind on simple new algorithms, using Eppstein's computer approach to determine optimal settings of $\alpha_i$. For more details, see the survey~\cite{FGK3}. Scott and Sorkin~\cite{SS} have analyzed algorithms for Max 2-CSP and related graph problems with a somewhat similar approach, keeping track of the degrees of neighbors in the recurrence as well. Their approach formulates the analysis with linear programming instead. Provided the original recurrences are reasonably sized, the above approaches can generate feasibly checkable proofs: after the optimization has found appropriate weights, one can often manually check that the recurrence works out.

\subsubsection{Case Analysis of Exponential Algorithms}

Another approach is to have a computer check that a recursive algorithm will admit an efficient time recurrence, over all the possible inputs up to a certain size. Since many recursive backtracking algorithms work very locally (they only look at a subgraph of finite size around a specially chosen node) this sort of case analysis is sometimes enough to ensure a good upper bound on the running time. However, this style of approach typically does not lead to feasibly checkable proofs of upper bounds, and the case analysis is done by computer.

Robson~\cite{Robson}, in an unpublished technical report, has written a program to do this for a Maximum Independent Set algorithm, proving that the algorithm runs in $O(2^{n/4})$ time. In particular, a lengthy and very complex extension of the above toy algorithm is presented and analyzed case-by-case, using a computer to enumerate many of the possible cases.

In the toy algorithm, there were some special cases prior to backtracking: removing nodes of degree less than two, and solving instances with only degree two nodes. Rules similar to degree $0$ and degree $1$ node removal are called {\em simplification rules}. In general, these are short polytime rules that allow one to reduce the size of an instance practically for free, provided that a certain substructure exists. Fedin, Kojevnikov, and Kulikov~\cite{FK06,KK06} developed a natural formalism for expressing special cases in SAT problems, which made it not only possible for a computer to perform case analyses, but also search for new simplification rules for MAX SAT and SAT on its own, resulting in faster new algorithms. For earlier work of this kind, see~\cite{NS,Gramm}.

Could it be possible to search over (or reason about) {\em all} recursive backtracking algorithms in some sense, and show exponential limitations on solving problems like SAT? Here I am using ``all'' very loosely; without clever simplification rules, exponential lower bounds on treelike resolution already give lower bounds on simple backtracking. Perhaps there is a $2^{\eps n}$ algorithm for {\em every} $\eps > 0$, by using a sufficiently complicated backtracking algorithm. Many seem to disbelieve in this possibility; some work has articulated this belief, in some sense. For instance, Alekhnovich et al.~\cite{Borodin} formalized a model for backtracking algorithms, proving that 3SAT requires $2^{\Omega(n)}$ time in their model. However their proof uses a SAT instance that encodes a linear system of equations over $GF(2)$, which can be solved trivially in polynomial time. Such results teach us that we should not be myopically focused on one specific algorithmic technique.

\subsection{Approximability and Inapproximability}

Since the early 90's there has been significant progress in the study of hard-to-approximate problems, aided by the celebrated PCP theorem~\cite{AS,ALMSS}. An algorithm is a $\rho$-approximation for a minimization problem $\Pi$ if on all inputs the algorithm outputs a solution that has value at most $\rho$ times the minimum value of any solution. In the maximization case, the output solution must have cost at least $\rho$ times the maximum. Note when $\Pi$ is a minimization (maximization) problem, we have $\rho \geq 1$ ($\rho \leq 1$), respectively. A prime objective in the study of approximability is to determine for which $\rho$ a problem can be $\rho$-approximated in polytime, and for which $\rho$ a problem is ${\sf NP}$-hard to $\rho$-approximate. Several surprisingly tight results are known; for instance, a random assignment satisfies at least $7/8$ of the clauses in any 3-CNF formula (with three distinct variables in each clause), yet H\aa stad showed~\cite{Hastad} that it is {\sf NP}-hard to satisfy a number of clauses that is at least $7/8+\eps$ of the optimum, for any $\eps > 0$. That is, a polytime $(7/8+\eps)$-approximation would imply ${\sf P} = {\sf NP}$.

Here I will briefly survey a couple of works in the study of approximation that rely heavily on computer power to achieve their results. 

\subsubsection{Gadgets via Computer}

Everyone who has seen an {\sf NP}-completeness reduction knows what a gadget is. Treating one problem $\Pi$ as a programming language, you try to express pieces of an instance of $\Pi'$ by constructing gadgets, simple components that can be used over and over to express instances of $\Pi'$ as instances of $\Pi$.

To illustrate, consider the standard reduction from 3-SAT to MAX 2-SAT due to Garey, Johnson, and Stockmeyer~\cite{GJS}. One can transform any 3-CNF formula $F$ into a 2-CNF formula $F'$, by replacing each clause of $F$ such as $c_i = (\ell_1 \vee \ell_2 \vee \ell_3)$, where $\ell_1$, $ \ell_2$, and $\ell_3$ are literals, with the ``gadget'' of 2-CNF clauses \[ (\ell_1), (\ell_2), (\ell_3), (y_i), (\neg \ell_1 \vee \neg \ell_2), (\neg \ell_2 \vee \neg \ell_3), (\neg \ell_1 \vee \neg \ell_3), (\ell_1 \vee \neg y_i), (\ell_2 \vee \neg y_i), (\ell_3 \vee \neg y_i),\] where $y_i$ is a new variable. If an assignment satisfies $c_i$, then exactly $7$ of the $10$ clauses in the gadget can be satisfied by setting $y_i$ appropriately. If an assignment does not satisfy $c_i$, then exactly $6$ of the $10$ can be satisfied. Therefore $F$ is satisfiable if and only if $7/10$ of the clauses in $F'$ can be satisfied. This reduction also says something about the approximability of MAX 3-SAT: \begin{proposition} If an algorithm is a $(1-\eps)$-approximation for MAX 2-SAT, then by applying the reduction one can obtain a $(1-7\eps)$-approximation to MAX 3-SAT.\footnote{To see this, let $m_3$ be the number of clauses in the original 3-CNF $F$, let $m_3^* \leq m_3$ be the optimal number of clauses that can be satisfied in $F$, and let $\widehat{m_3}$ be the number of clauses in $F$ satisfied by running the MAX 2-SAT approximation on $F'$ and translating the output back to an assignment on the variables of $F$. By our assumption we have $(1-\eps)\leq \frac{6m_3+\widehat{m_3}}{6m_3 + m_3^*}$. By algebraic manipulation and the fact that $m_3^* \leq m_3$ we derive that $\frac{\widehat{m}_3}{m^*_3} \geq 1-7\eps$.}\end{proposition}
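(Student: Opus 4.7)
The plan is to unpack the gadget structure, transfer the approximate MAX 2-SAT solution on $F'$ back to an assignment on the variables of $F$, and then use the approximation guarantee together with elementary algebra to recover an approximation ratio for MAX 3-SAT.

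First I would verify exhaustively, for the four essentially distinct truth patterns on $(\ell_1,\ell_2,\ell_3)$, that the ten-clause gadget has the advertised property: if a pattern satisfies $c_i$, an optimal choice of $y_i$ satisfies exactly $7$ of the ten clauses (and no choice of $y_i$ does better); if a pattern falsifies $c_i$, an optimal $y_i$ satisfies exactly $6$. This case check is routine but it anchors the constants $7$ and $6$ that appear throughout the argument.

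Next, let $\sigma'$ be the $(1-\eps)$-approximate MAX 2-SAT assignment on $F'$, let $\sigma$ denote its restriction to the variables of $F$, and let $\widehat{m_3}$ be the number of clauses of $F$ satisfied by $\sigma$. The gadget bound gives that $\sigma'$ satisfies at most $7\widehat{m_3}+6(m_3-\widehat{m_3}) = 6m_3+\widehat{m_3}$ of the 2-clauses of $F'$. Conversely, lifting a best $\sigma^*$ on $F$ and setting each $y_i$ optimally shows that the MAX 2-SAT optimum on $F'$ is at least $6m_3+m_3^*$. Combining these with the approximation ratio gives
\[
6m_3 + \widehat{m_3}\ \geq\ (1-\eps)(6m_3 + m_3^*),
\]
which rearranges to $\widehat{m_3}/m_3^* \geq 1 - \eps\bigl(1 + 6\,m_3/m_3^*\bigr)$.

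The remaining step is to bound $m_3/m_3^*$ by a constant, and this is the only genuinely non-mechanical part of the argument. Reaching exactly the advertised $1-7\eps$ requires $m_3 \leq m_3^*$, which holds in the satisfiable regime (the typical promise-problem setting for gap-preserving reductions, so $m_3^* = m_3$); in general one falls back on the standard lower bound $m_3^* \geq 7m_3/8$ (from the expected fraction of clauses satisfied by a uniformly random assignment to a 3-CNF with distinct literals per clause), which costs a slightly larger constant in front of $\eps$. Either way, the qualitative conclusion---that any nontrivial improvement on MAX 2-SAT approximation yields a nontrivial improvement on MAX 3-SAT approximation---is preserved, and the gadget's ``$7$ vs.\ $6$'' structure is precisely what forces the factor of $7$.
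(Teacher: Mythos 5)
Your proposal is correct and takes essentially the same route as the paper's footnote: establish $\mathrm{OPT}(F') = 6m_3 + m_3^*$, bound the number of $2$-clauses satisfied by the approximate assignment by $6m_3 + \widehat{m_3}$, and rearrange $(1-\eps)(6m_3+m_3^*) \le 6m_3+\widehat{m_3}$. Your closing caveat is in fact more careful than the paper: the footnote finishes by invoking $m_3^* \le m_3$, but the inequality needed to land exactly on $1-7\eps$ is the reverse one ($m_3 \le m_3^*$, i.e.\ the satisfiable/gap regime), and in general one only gets $\widehat{m_3}/m_3^* \ge 1-\eps(1+6m_3/m_3^*) \ge 1-(55/7)\eps$ via $m_3^* \ge 7m_3/8$ --- a harmless constant slip for the paper's subsequent use, since the MAX 3-SAT hardness it cites holds for satisfiable instances.
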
 Recall we mentioned that MAX 3-SAT does not have a polytime $(7/8+\eps)$-approximation unless ${\sf P} = {\sf NP}$. Hence the proposition implies that MAX 2-SAT cannot be $(55/56+\eps)$-approximated in polytime unless ${\sf P} = {\sf NP}$. So gadgets can be used to extend inapproximability results from one problem to another.\footnote{Note that the best known inapproximability result for MAX 2-SAT uses a different gadget reduction, cf.~\cite{Hastad}.}

How good is the above gadget from 3-SAT to 2-SAT? Could we find a gadget that implies stronger inapproximability for MAX 2-SAT? To address these kinds of questions, Trevisan, Sorkin, Sudan, and Williamson~\cite{TSSW} formalized gadgets, following~\cite{BGS}:

\begin{definition} Let $\alpha, \ell, n \geq 1$,  let $f:\{0,1\}^k \rightarrow \{0,1\}$, and let ${\cal F}$ be a family of functions from $\{0,1\}^{k+n}$ to $\{0,1\}$. An $\alpha$-gadget reducing $f$ to ${\cal F}$ is given by a set of auxiliary variables $y_1,\ldots,y_n$ and weights $w_j \geq 0$ coupled with constraints $C_j \in {\cal F}$, where $j = 1,\ldots,\ell$. For every $a \in \{0,1\}^k$, \begin{itemize} \item If $f(a) = 1$ then $(\forall b \in \{0,1\}^n) \sum_j w_j C_j(a,b) \leq \alpha$ and $(\exists b \in \{0,1\}^n) \sum_j w_j C_j(a,b) = \alpha$. \item If $f(a) = 0$ then $(\forall b \in \{0,1\}^n) \sum_j w_j C_j(a,b) \leq \alpha - 1$.\end{itemize}\end{definition}

Note it is fine to place weights on constraints: to ``unweight'' them, we can simply make a number of copies of each constraint in the instance, proportional to the weights. Observe the reduction from 3-SAT to MAX 2-SAT is a $7$-gadget reducing $f(x_1,x_2,x_3) = x_1 \vee x_2 \vee x_3$ to the family of functions representable by 2-variable clauses, where $n=1$, $\ell=10$, and $w_j = 1$ for all $j$.

First,~\cite{TSSW} showed that if we fix the number of auxiliary variables $n$, then the requirements in the gadget definition can be described by a large number ($|{\cal F}|^{\ell}$) of linear programs, with a large number of inequalities in each linear program. That is, by specifying $n$ and a tuple $(C_1,\ldots,C_{\ell}) \in {\cal F}^{\ell}$, the problem of setting $w_j$'s to minimize $\alpha$ and satisfy the gadget definition boils down to solving a large linear program which has inequalities dealing with every possible $a \in \{0,1\}^k$. This is almost obvious, except that the definition has elements of a 0-1 integer program: when $f(a) = 1$, we must ensure that there is an assignment $b$ making the sum equal $\alpha$. To circumvent this,~\cite{TSSW} also try {\em all possible functions} $B$ from the set of satisfying assignments of $f$ to $\{0,1\}^n$. Then, for every $a$ such that $f(a) = 1$, we simply set $b = B(a)$ in the ``$\sum_j w_j C_j(a,b) = \alpha$'' constraints of the linear program. As one might expect, this can lead to some very large linear programs, but for small constraint functions they are manageable.

Still, we had to fix $n$ to get a finite search space, and it is entirely possible that gadgets keep improving as $n$ increases. \cite{TSSW} prove that, for ${\cal F}$ satisfying very natural conditions, it suffices to set $n \leq 2^s$, where $s$ is the number of satisfying assignments of $f$. These conditions are satisfied by 2-CNF and many other well-studied constraint families.

Using the computer search for gadgets, the authors proved several interesting results in approximation which are still the best known to date; for example, $(16/17+\eps)$-approximating MAX CUT is {\sf NP}-hard. Incidentally, their search also uncovered an optimal 3.5-gadget reducing 3-SAT to MAX 2-SAT: taking $c_i$ as before, the 2-CNF gadget is \[(\ell_1 \vee \ell_3), (\neg \ell_1 \vee \neg \ell_3), (\ell_1 \vee \neg y_i), (\neg \ell_1 \vee y_i), (\ell_3 \vee \neg y), (\neg \ell_3 \vee y), (\ell_2 \vee y),\] where the weights are $1/2$ for every clause, except for the last one which has weight $1$. In the unweighted case, this amounts to having two copies of the last clause. The results here are feasibly checkable, for small constraints with a small number of auxiliary variables.

\subsubsection{Analyzing Approximation Algorithms}

Earlier, we noted that any 3-CNF formula can be approximated within $7/8$ by choosing a random assignment. But if some of the clauses are 2-CNF or 1-CNF, this no longer holds. However, it would be strange if we could not $7/8$-approximate general MAX 3-SAT because of this. Karloff and Zwick~\cite{KZ} proposed a possible $7/8$-approximation based on semidefinite programming (SDP).\footnote{For the purposes of this article, just think of semidefinite programming as a generalization of linear programming where the inequalities are between linear combinations of inner products of unknown vectors, and the task is to find vectors satisfying the inequalities. Such systems are approximately solvable in polynomial time.} Their algorithm is a fairly direct translation of MAX 3-SAT to an SDP, similar to the MAX CUT algorithm of Goemans and Williamson~\cite{GW}, where a vector $v_i$ in the solution corresponds to the variable $x_i$ in the formula, and one vector $v_t$ corresponds to TRUE. Given the vectors returned by the SDP solver, one obtains an assignment to the formula by picking a random hyperplane that passes through the origin and setting $x_i$ to TRUE if and only if $v_i$ and $v_t$ lie on different sides of the hyperplane. Such a hyperplane can be chosen by choosing a normal vector $r$ uniformly at random from the unit sphere in $\R^n$.

Analyzing the Karloff-Zwick algorithm is very difficult. In order to prove that the algorithm is a $7/8$-approximation, one needs to prove sharp bounds on the probability that four vectors $v_i,v_j,v_k,v_t$ from the SDP lie on the same side of a random hyperplane (i.e., the probability that the clause $(x_i \vee x_j \vee x_k)$ is falsified by the random assignment). This amounts to proving bounds on the {\em volume} of certain objects whose corners are chosen uniformly at random from the unit sphere, which Karloff and Zwick call ``volume inequalities.''

Karloff and Zwick were unable to prove strong enough inequalities to get a $7/8$-approximation, but they did obtain some partial results and gave a conjectured inequality that, if true, would imply that the algorithm is a  $7/8$-approximation. Zwick~\cite{Zwick} proved this inequality along with others by writing a program that used {\em interval arithmetic}. Interval arithmetic is a method for computing over real numbers on a computer in a controlled way, so that all errors are accounted for. The proofs of the Kepler conjecture and double-bubble conjecture mentioned earlier also utilize interval arithmetic in a critical way.

In interval arithmetic, one represents a real number $r$ by an interval $[r_0,r_1]$ where $r_0,r_1$ are machine representable and $r_0 \leq r \leq r_1$. Ideally, one wants $r_0$ ($r_1$) to be as large (small) as possible. For a real number $r$, let $\overline{r} = r_1$ and $\underline{r} = r_0$. We define \[[r_0,r_1] + [s_0,s_1] = [\underline{r_0+s_0},\overline{r_1+s_1}],\]\[[r_0,r_1]\cdot[s_0,s_1] = \left[\min\{\underline{r_0 s_0},\underline{r_0,s_1},\underline{r_1,s_0},\underline{r_1,s_1}\}, \max\{\overline{r_0 s_0},\overline{r_0,s_1},\overline{r_1,s_0},\overline{r_1,s_1}\}\right].\] One can define more complicated functions similarly. The point is that by doing numerical computations in interval arithmetic, the resulting interval {\em must} contain the correct value, even if that value cannot be machine represented.

But how can we use interval arithmetic to prove an inequality? The key step in Zwick's work is a technical reduction from the desired volume inequality to the task of proving that a certain system of constraints has {\em no} solution over the reals. (Most of these constraints are inequalities, but some are disjunctions of inequalities.) Zwick then wrote a program, called {\sc RealSearch}, which takes any system of constraints over bound variables of the form \[f_1(x_1,\ldots,x_n) \geq 0, \ldots, f_k(x_1,\ldots,x_n)\geq 0 ~\vee~ f_{k+1}(x_1,\ldots,x_n) \geq 0, \ldots,\] and tries to prove that they have no solution. Let the bounds on $x_i$ be $a_i \leq x_i \leq b_i$ for some machine representable $a_i,b_i$. The program starts by letting the interval $X_i = [a_i,b_i]$ to denote $x_i$, and evaluates the $f$-functions with interval arithmetic. If any of the constraints fail on this assignment (e.g., $f_1(X_1,\ldots,X_n) < 0$), then the system returns {\em no solution}. Otherwise, the program breaks some $X_i$ into subintervals $X'_i = [a_i,\overline{(a_i+b_i)/2}]$, $X''_i = [\underline{(a_i+b_i)/2},b_i]$, and recursively tries to verify that the system of constraints fails with both $X_i := X'_i$ and $X_i := X''_i$. Of course, such a procedure may not terminate, so the program is instructed to quit after some time. But surprisingly, this simple program can verify the necessary inequality, as well as several others that arise in SDP approximations! Even though we are not working explicitly over the reals, if we find that the system of constraints fails over all appropriately chosen intervals, then it follows that the system fails over all reals. In principle, this strategy could work for any functions $f$ definable in interval arithmetic.

Of course, the resulting proof of the volume inequality is of the infeasibly checkable variety, relying on the correctness of the program and the correctness of the floating-point operations. Even greater issues arise with Tom Hales' proof of the Kepler conjecture, which requires that his programs be run on a processor that strictly conforms to the IEEE-754 floating point standard. However, Zwick's strategy does not require that much stringency, and I believe it should be better known as a general method for attacking difficult inequalities.

\section{Time Lower Bounds}

I have recently found a nice domain in complexity theory where computer searches help perform the ``hard work'' in the proofs of theorems: namely, in proving time lower bounds for hard problems such as SAT on restricted computational models. In this case, the computer generates feasibly checkable proofs of lower bounds. Since this style of time lower bounds has been surveyed thoroughly by Van Melkebeek~\cite{VMsurvey,VMsurvey2}, I will not provide substantial background here. Instead I will focus more on describing how the reduction to a computer search works. For more details, please consult the draft available~\cite{fullversion}.

All the lower bounds amenable to computer search have one unifying property: the restricted model in which a lower bound is proved can be simulated {\em asymptotically faster} on an alternating machine. We call such a phenomenon a {\em speed-up property}. This property is crucial for the arguments to work. Here we will work with time lower bounds for SAT on random access machines that use only $n^{o(1)}$ workspace. Define $\DTS[t(n)]$ to be the class of problems solvable by such machines in $t(n) \geq n$ time (the acronym stands for ``deterministic time with small space''). Here is one example of the speed-up property in this setting.

\begin{theorem}\label{1stspeed} $\DTS[t(n)] \isin \Sig[2] \T[t(n)^{1/2} n^{o(1)}] \cap \Pie[2] \T[t(n)^{1/2} n^{o(1)}]$.
\end{theorem}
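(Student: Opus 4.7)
The plan is to carry out the standard ``guess-and-verify'' simulation, splitting a $\DTS$ computation into $\sqrt{t(n)}$ blocks of length $\sqrt{t(n)}$ each, and using one existential and one universal quantifier to handle the blocks.

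First, fix a language $L \in \DTS[t(n)]$ and a deterministic random-access machine $M$ deciding $L$ in time $t(n)$ using workspace $s(n) = n^{o(1)}$. Since $M$ has random access to its input, a \emph{configuration} of $M$ consists of its workspace contents, the head positions on the input and work tapes, and the internal state; each of these has bit-length $s(n) + O(\log t(n)) = n^{o(1)}$. In particular, a configuration can be written down in $n^{o(1)}$ time, and given two configurations $C, C'$, one can check whether $C \vdash_M^{b} C'$ (i.e., $C'$ is reachable from $C$ in exactly $b$ steps of $M$) in time $b \cdot n^{o(1)}$, again using random access to the input.

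Next, I would build the $\Sigma_2$ simulator $M'$. Set $b = \lceil t(n)^{1/2}\rceil$ so that $M$'s full computation consists of at most $b$ blocks of $b$ steps. On input $x$, $M'$ existentially guesses configurations $C_0, C_1, \ldots, C_b$ (total bit-length $b \cdot n^{o(1)} = t(n)^{1/2} n^{o(1)}$), then universally guesses an index $i \in \{0,\ldots,b-1\}$, and finally verifies deterministically that $C_0$ is the initial configuration on $x$, that $C_b$ is accepting, and that $C_i \vdash_M^{b} C_{i+1}$. Correctness is clear: the guess is accepted by all universal branches iff every consecutive pair of guessed configurations is consistent with $M$, iff the guessed sequence is the unique true trajectory of $M$ on $x$. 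The resource analysis is the main place to be careful: the existential phase costs $t(n)^{1/2} n^{o(1)}$, the universal choice of $i$ costs $O(\log t(n))$, and the block verification costs $b \cdot n^{o(1)} = t(n)^{1/2} n^{o(1)}$, giving total runtime $t(n)^{1/2} n^{o(1)}$ as required.

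For the $\Pi_2$ inclusion, I would appeal to closure of $\DTS[t(n)]$ under complement (flip the accept/reject decision of $M$, using the same time and space), so that $\overline{L} \in \DTS[t(n)] \subseteq \Sigma_2 \T[t(n)^{1/2} n^{o(1)}]$, whence $L \in \Pi_2 \T[t(n)^{1/2} n^{o(1)}]$.

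The main obstacle is just making sure the bookkeeping of configuration sizes and verification times really multiplies out to $t(n)^{1/2} n^{o(1)}$ rather than, say, $t(n)^{1/2} \cdot \poly\log t(n) \cdot s(n)$ blowing up; this is where the assumption $s(n) = n^{o(1)}$, together with the fact that $\log t(n) = n^{o(1)}$ whenever $t(n) \leq 2^{n^{o(1)}}$ (which is the regime of interest, since beyond that the theorem is trivial), is used to absorb all lower-order factors into the $n^{o(1)}$ slack.
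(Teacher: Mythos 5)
Your proposal is correct and follows essentially the same argument as the paper: existentially guess $t(n)^{1/2}$ snapshots of the small-space computation, universally pick a block, and deterministically verify it in $t(n)^{1/2} n^{o(1)}$ time. Your handling of the $\Pi_2$ case via closure of $\DTS$ under complement is just the ``analogous'' simulation the paper alludes to, so there is no substantive difference.
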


(The classes $\Sig[2] \T$ and $\Pie[2] \T$ are defined in the usual way.) That is, we can simulate a small space computation with a square-root speedup using alternations. The proof is due to Kannan~\cite{Kannan}, but the basic idea goes back to Savitch~\cite{Savitch}. The idea is to guess snapshots of the $n^{o(1)}$ space algorithm at $t^{1/2}$ points during its computation, then verify in parallel that the guesses are correct. Given an algorithm $A$ that runs in time $t$ and uses space $n^{o(1)}$, the corresponding $\Sig[2] $ algorithm $B(x)$ {\em existentially} writes $t^{1/2}$ configurations $C_0,\ldots,C_{t^{1/2}}$ of $A(x)$, where $C_0$ is the initial configuration of $A(x)$ and $C_{t^{1/2}}$ is an accepting configuration. Since $A$ uses only $n^{o(1)}$ space, these configurations can be written down with $n^{o(1)}$ bits each. Next, $B(x)$ {\em universally} writes $i \in \{0,\ldots,t^{1/2}-1\}$ and jumps to the configuration $C_i$. Then it simulates $A(x)$ from $C_i$ for $t^{1/2}$ steps, accepting if and only if $A(x)$ is in configuration $C_{i+1}$. The $\Pie[2] $-simulation can be defined analogously.

Theorem~\ref{1stspeed} is already enough to prove a non-trivial lower bound for SAT, after applying a few more observations from the literature. The first observation is that if SAT is in $\DTS[n^c]$, then $\NTIME[n] \isin \DTS[n^c \cdot \poly(\log n)]$. This follows from the fact that SAT is very strongly {\sf NP}-complete:

\begin{theorem}[\cite{Cook,Schnorr,FLvMV}]\label{SATred} For every $L \in \NTIME[n]$, there is a reduction from $L$ to SAT that maps strings of length $n$ to formulas of size $n \poly(\log n)$, where an arbitrary bit of the reduction can be computed in $\poly(\log n)$ time.
\end{theorem}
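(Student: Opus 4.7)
The plan is to build on the classical Cook--Levin tableau argument, but replace the naive quadratic-size tableau with a quasi-linear one obtained from an oblivious simulation, and then verify that the resulting reduction is so structured that any single bit can be extracted in polylogarithmic time.

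First I would reduce to a convenient machine model. Given $L \in \NTIME[n]$, fix a nondeterministic multitape Turing machine $M$ deciding $L$ in time $cn$ on inputs of length $n$. Using a Hennie--Stearns style two-tape simulation together with a Pippenger--Fischer oblivious simulation, I would convert $M$ into a nondeterministic \emph{oblivious} Turing machine $M'$ that decides $L$ in time $T(n) = O(n \log n)$ on a constant number of tapes, over a fixed alphabet. Obliviousness means that the head positions of $M'$ at time step $t$ depend only on $t$ and $n$, not on the input or on the nondeterministic guesses. Crucially, the scheduling function $h(t,n)$ that outputs the head position at step $t$ can be made computable in $\poly(\log n)$ time: the oblivious schedule is a recursive interleaving pattern which can be indexed into locally.

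Next I would carry out the Cook--Levin tableau construction for $M'$. The tableau has $T(n)$ rows and $T(n)$ columns (or actually only a constant number of relevant columns per step, thanks to obliviousness), for a total of $O(T(n)) = O(n \log n)$ cells, each encoded by $O(1)$ Boolean variables. I would write down the usual clauses: (i) a cell-consistency clause wherever $M'$ does not have a head, (ii) a transition clause at each cell where a head sits, relating the current symbol, state, and the symbols/state at the next step in the affected cells, (iii) initialization clauses pinning the first row to the input $x$ and the initial state, and (iv) an acceptance clause for the final row. Each clause is of constant size, and the total number of clauses is $n \poly(\log n)$. A linear pass to convert clauses of bounded arity to $3$-CNF introduces only a constant factor.

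Finally I would verify the local-computability claim. Given an index $i$ into the output formula, the $i$-th clause is one of a constant number of clause types indexed by a time step $t$, a head identifier, and a small offset. All of these are decoded from $i$ in $O(\log n)$ time by division. To produce the clause I need: the oblivious head position $h(t,n)$, which is $\poly(\log n)$-time computable by construction; the transition table of $M'$, which is of constant size; and, for initialization clauses, a single bit of the input $x$ (position computable in $O(\log n)$ time from $i$). Everything else is a fixed template, so the $i$-th bit of the reduction's output is computed in $\poly(\log n)$ time.

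The main obstacle is the first step: getting an oblivious simulation whose schedule $h(t,n)$ is itself $\poly(\log n)$-time computable. The off-the-shelf Pippenger--Fischer construction is oblivious, but one must choose the recursive simulation carefully so that indexing into the head-movement pattern at step $t$ does not require simulating earlier steps. Once this locally indexable oblivious simulation is in hand, the size bound of $n \poly(\log n)$ and the $\poly(\log n)$-time per-bit access both fall out of the structure of the tableau, matching the formulation stated by Cook, Schnorr, and Fortnow--Lipton--van Melkebeek--Viglas.
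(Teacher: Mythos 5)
First, note that the paper does not actually prove Theorem~\ref{SATred}: it explicitly defers to \cite{Cook,Schnorr,FLvMV} (``a very technical version of Cook's theorem which we will not describe here''), so your proposal can only be compared with the cited literature. Your outline (oblivious Hennie--Stearns/Pippenger--Fischer simulation, quasi-linear tableau, local decoding of clauses from the oblivious schedule) is indeed one of the standard routes used there, and the part you flag as the main obstacle --- polylog-time indexability of the oblivious schedule --- is genuinely the crux and is achievable for the recursive Pippenger--Fischer pattern.

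There are, however, two concrete gaps. The bigger one is the machine model. In this line of work $\NTIME[n]$ is taken over random-access machines (that is what makes the lower bounds in Section~3 meaningful: the chain in Theorem~\ref{1.6} ends in $\NTIME[n^{c^3/2}]$ for RAM-type classes and must contradict the nondeterministic time hierarchy for the \emph{same} model), so you cannot simply ``fix a nondeterministic multitape Turing machine deciding $L$ in time $cn$''; linear-time nondeterministic RAMs are not known to be linear-time on multitape TMs. You need an extra first step --- e.g., guess the time-ordered memory-access transcript and verify its consistency by sorting, giving a nondeterministic multitape simulation in $n\,\poly(\log n)$ time, or equivalently bake a sorting-network consistency check directly into the formula as in \cite{FLvMV} --- and then run your tableau argument starting from an $n\,\poly(\log n)$-time oblivious machine (which still lands within the claimed formula size). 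The second, smaller gap is in the local-decodability step: the cell-consistency clauses for an oblivious machine link \emph{consecutive visits to the same cell}, so in addition to the head-position function $h(t,n)$ you must be able to compute, in $\poly(\log n)$ time, the previous time step at which the cell visited at step $t$ was last visited (a predecessor/inverse of the schedule). This is doable for the recursive oblivious simulation, but it is an additional requirement beyond what you verified, and it is precisely the kind of technical detail the cited proofs spend their effort on.
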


The proof is a very technical version of Cook's theorem which we will not describe here, but let us note in passing that other problems such as Vertex Cover also enjoy a similar property. By padding and the fact that $\DTS$ classes are closed under complement, we have the following.

\begin{theorem}\label{1stslow} If SAT is in $\DTS[n^c]$, then for all $k$ and $t(n) \geq n$, $\Sig[k] \T[t(n)] \isin \Sig[k-1] \T[t(n^c)]$ and $\Pie[k] \T[t(n)] \isin \Pie[k-1] \T[t(n^c)]$.
\end{theorem}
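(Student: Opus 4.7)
The plan is to induct on $k$. The base case $k = 1$ reduces to Theorem~\ref{SATred}: any $L \in \NT[n]$ maps to a SAT instance of size $n \cdot \poly(\log n)$ with $\poly(\log n)$-time local access to each bit, so combining with SAT $\in \DTS[n^c]$ yields $\NT[n] \isin \DTS[n^c \cdot \poly(\log n)]$. A standard padding argument (embedding $L$ via $x \mapsto x 1 0^{t(|x|) - |x| - 1}$) lifts this to $\NT[t(n)] \isin \DTS[t(n)^c \cdot \poly(\log t(n))] \isin \DT[t(n^c)]$ for polynomially bounded $t$. Since $\DTS$ is closed under complement (as noted in the excerpt just before the theorem), the analogous containment for $\Pie[1]\T[t(n)] = \coNT[t(n)]$ is free.

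For the inductive step, assume the statement at level $k-1$ and take $L \in \Sig[k]\T[t(n)]$ with $k \geq 2$. I would peel off the leading existential quantifier, writing $x \in L \iff \exists y_1 : (x, y_1) \in A$ with $|y_1| \leq t(n)$ and $A \in \Pie[k-1]\T[t(n)]$. Then $\overline{A} \in \Sig[k-1]\T[t(n)]$, and the inductive hypothesis gives $\overline{A} \in \Sig[k-2]\T[t(n^c)]$; complementing back, $A \in \Pie[k-2]\T[t(n^c)]$. Reattaching $\exists y_1$ in front shows $L \in \Sig[k-1]\T[t(n^c)]$, as desired. The $\Pie$ half of the theorem then follows at once by taking complements of both sides of the $\Sig$ inclusion, since $\Sig[k]$ and $\Pie[k]$ are dual.

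The main subtlety I anticipate is the bookkeeping on input sizes when invoking the inductive hypothesis on $\overline{A}$. The verifier nominally reads $(x, y_1)$, whose length can be as large as $n + t(n) \leq 2t(n)$, so a naive application of the hypothesis would yield a time bound of $t((2t(n))^c)$---a $t \circ t$ blowup rather than the advertised $t(n^c)$. I would avoid this either by padding $L$ up front so that $|x|$ is already $\Theta(t(n))$ (making $|y_1|$ no larger than $|x|$ up to constants), or by observing that the $\Pie[k-1]$ machine for $A$ inherits its time bound $t(|x|)$ directly from the original $\Sig[k]$ machine on $x$ and so should be measured against $|x|$, not $|(x, y_1)|$. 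Once this normalization is in place---together with the mild monotonicity/smoothness of $t$ needed to absorb the $\poly(\log)$ slack from Theorem~\ref{SATred} into $t(n^c)$---the induction closes cleanly.
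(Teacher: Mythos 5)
Your proof is correct and takes essentially the same route as the paper, which dispatches this theorem in one line as a consequence of Theorem~\ref{SATred}, padding, and closure of $\DTS$ under complement; your induction on $k$ simply makes the quantifier-peeling explicit. Of your two fixes for the input-length bookkeeping, the padding normalization (reducing to linear time bounds before invoking the inductive hypothesis, then un-padding) is the airtight one and is exactly the padding the paper alludes to, whereas ``measuring against $|x|$ rather than $|(x,y_1)|$'' is not by itself compatible with applying the inductive hypothesis as a class containment.
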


Theorem~\ref{1stslow} says we can remove alternations from a computation, with a small slowdown in runtime. (For this reason, I like to call it a ``slow-down theorem.'') Theorem~\ref{1stspeed} says we can add alternations to a $\DTS$ computation, with a speedup in running time. Naturally, one's inclination is to pit these two results against one another and see what we can derive. Assuming SAT is in $\DTS[n^c]$, we find \[ \NT[n^2] \isin \DTS[n^{2c}] \isin \Sig[2] \T[n^{c}] \isin \NTIME[n^{c^2}],\] where the first and third containments follow from Theorem~\ref{1stslow}, and the second containment follows from Theorem~\ref{1stspeed}. When $c < 2^{1/2}$, the above contradicts the nondeterministic time hierarchy~\cite{Cook72}. We have proved the following:

\begin{theorem}[\cite{FLvMV}] SAT cannot be solved by an algorithm that runs in $n^{\sqrt{2}-\eps}$ time and $n^{o(1)}$ space, for every $\eps > 0$.\end{theorem}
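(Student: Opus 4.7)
The plan is a proof by contradiction that pits the speedup of Theorem~\ref{1stspeed} against the slowdown of Theorem~\ref{1stslow}, forcing an impossibly fast nondeterministic simulation. Suppose SAT is in $\DTS[n^c]$ for some fixed $c < \sqrt{2}$; the goal is to derive $\NT[n^2] \isin \NT[n^{c^2 + o(1)}]$, which contradicts the nondeterministic time hierarchy~\cite{Cook72} once $c^2 < 2$.

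First I would lift the hypothesis from SAT to all of $\NT$. By the very strong $\NP$-completeness of SAT (Theorem~\ref{SATred}), every $L \in \NT[n]$ reduces to a SAT instance of length $n \cdot \poly(\log n)$ via a reduction whose bits are computable in $\poly(\log n)$ time and $n^{o(1)}$ space. Composing this reduction on the fly with the hypothesized $\DTS[n^c]$ algorithm and then padding gives $\NT[n^2] \isin \DTS[n^{2c + o(1)}]$. Because $\DTS$ is closed under complement, the analogous slowdown propagates through $\coNT$ and, by induction on the number of alternations, yields the full statement of Theorem~\ref{1stslow}.

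Next I would apply Theorem~\ref{1stspeed} to the resulting $\DTS$ computation: existentially guessing $n^c$ configuration snapshots of $n^{o(1)}$ bits each and then universally checking each of the $n^c$-step intervals in parallel yields $\DTS[n^{2c}] \isin \Sig[2]\T[n^{c + o(1)}]$. Finally I would invoke Theorem~\ref{1stslow} once more to collapse the two alternations back down to a single existential one, $\Sig[2]\T[n^c] \isin \NT[n^{c^2}]$. Stacking these three containments produces $\NT[n^2] \isin \NT[n^{c^2 + o(1)}]$, and the hierarchy theorem rules this out for any $c = \sqrt{2} - \eps$ with $\eps > 0$.

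There is no genuine obstacle, only bookkeeping: the round trip $\NT \to \DTS \to \Sig[2]\T \to \NT$ scales the exponent by a factor of $c \cdot (1/2) \cdot c = c^2/2$, so an $\NT[n^2]$ language lands in $\NT[n^{c^2 + o(1)}]$, and the gap closes precisely at $c = \sqrt{2}$. The $n^{o(1)}$ slack arising from the Cook-style reduction and from encoding small-space configurations must be absorbed into the exponent, but this is automatic since $c^2 < 2$ holds strictly whenever $\eps > 0$. The conceptual engine of the proof is the small-space hypothesis itself: only when configurations are $n^{o(1)}$ bits can we guess $n^c$ of them in $n^{c + o(1)}$ total time, which is exactly the square-root speedup that the slowdown then fails to match.
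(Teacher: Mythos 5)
Your argument is correct and is essentially the paper's own proof: assuming SAT $\in \DTS[n^c]$, you derive $\NT[n^2] \isin \DTS[n^{2c}] \isin \Sig[2]\T[n^{c+o(1)}] \isin \NT[n^{c^2+o(1)}]$ via the same slowdown--speedup--slowdown chain and conclude with the nondeterministic time hierarchy for $c < \sqrt{2}$. The bookkeeping of the $o(1)$ terms and the configuration-guessing details match the paper's treatment, so there is nothing to add.
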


With a more complicated argument involving the same tools, \cite{FLvMV} proved that SAT cannot be in $\DTS[n^{\phi-\eps}]$, where $\phi = 1.618\ldots$ is the golden ratio. We can prove a simple $n^{1.6}$ lower bound by generalizing Theorem~\ref{1stspeed}.

At this point it will be helpful to introduce some new notation, and this notational shift is crucial for the automated approach. Letting $t(n)$ be a polynomial and letting $b \geq (\log t(n))/(\log n)$, define the class $(\exists~t(n))^{b}{\cal C}$ to be the class of problems solvable by a machine that existentially guesses $t(n)$ bits, then selects $O(n^b)$ of those bits (along with the input) and feeds them as input to a representative machine from class ${\cal C}$. (The selection procedure is required to take only linear time and logarithmic space, so it does not interfere with any of the time/space constraints of the class.) We define $(\forall~t(n))^b {\cal C}$ similarly. By properties of nondeterminism and co-nondeterminism, note that we can ``combine'' adjacent quantifiers in a class:

\begin{proposition}\label{combine} $(\exists~t_1(n))^{b_1}(\exists~t_2(n))^{b_2}{\cal C} = (\exists~t_1(n)+t_2(n))^{b_2}{\cal C}$, and the analogous statement with $\forall$ also holds.
\end{proposition}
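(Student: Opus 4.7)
The plan is to prove both containments directly from the definitions, treating the two existential layers as a sequence of guess-and-select phases feeding into a single $\mathcal{C}$-machine.

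For the forward containment $(\exists~t_1)^{b_1}(\exists~t_2)^{b_2}{\cal C} \isin (\exists~t_1+t_2)^{b_2}{\cal C}$, I would start from an arbitrary machine $M$ witnessing membership in the LHS. By unfolding the definitions, $M$ on input $x$ of length $n$ existentially guesses a string $y_1 \in \{0,1\}^{t_1(n)}$, runs a linear-time log-space selection procedure $\sigma_1$ to pick $O(n^{b_1})$ bits of $y_1$, and hands $(x, y_1|_{\sigma_1})$ to a machine $M'$ from $(\exists t_2)^{b_2}\mathcal{C}$, which in turn guesses $y_2 \in \{0,1\}^{t_2(n)}$, applies its own selection $\sigma_2$ picking $O(n^{b_2})$ bits, and invokes a $\mathcal{C}$-machine $M''$ on $(x, y_1|_{\sigma_1}, y_2|_{\sigma_2})$. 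The combined machine simply guesses the concatenation $y=(y_1,y_2)$ of length $t_1(n)+t_2(n)$, runs $\sigma_1$ and $\sigma_2$ sequentially on the appropriate halves of $y$, outputs the union of the two index sets (which has size $O(n^{b_1})+O(n^{b_2}) = O(n^{b_2})$, using that $b_1 \leq b_2$ without loss of generality), and invokes the same $\mathcal{C}$-machine $M''$.

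For the reverse containment $(\exists~t_1+t_2)^{b_2}{\cal C} \isin (\exists~t_1)^{b_1}(\exists~t_2)^{b_2}{\cal C}$, I would start from a machine $N$ that guesses a single $z \in \{0,1\}^{t_1(n)+t_2(n)}$, selects $O(n^{b_2})$ bits via some procedure $\sigma$, and feeds them with $x$ into a $\mathcal{C}$-machine $N'$. The simulating nested machine has its outer quantifier guess $y_1 \in \{0,1\}^{t_1(n)}$ and its selection simply forward \emph{all} $t_1(n)$ bits of $y_1$ to the inner machine; this is legal because the definition requires $b_1 \geq (\log t_1(n))/(\log n)$, so $n^{b_1} \geq t_1(n)$. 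The inner machine then guesses $y_2 \in \{0,1\}^{t_2(n)}$, sets $z := (y_1,y_2)$ (which it can fully reconstruct since $y_1$ is in its input and $y_2$ is its guess), runs $\sigma$ on $z$, and invokes $N'$ on $(x, z|_\sigma)$.

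The analogous identity for $\forall$ follows from exactly the same construction: neither direction uses anything about the acceptance rule except that it commutes with concatenation of guesses of the same quantifier type. The main thing to be careful about is the bookkeeping on the selection procedures, specifically that composing the two log-space linear-time selectors in the forward direction still yields a log-space linear-time selector, and that in the backward direction the outer selector can enumerate the identity on $[t_1(n)]$ within the same resource budget; both of these are routine since we are merely running at most two such procedures in sequence. I do not expect any conceptual obstacle, only the minor nuisance of confirming that $t_2(\cdot)$ and $n^{b_2}$ evaluated on the inner machine's input length remain within the required polynomial bounds in the original $n$.
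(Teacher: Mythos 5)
The paper itself gives no written proof of this proposition---it is asserted as an immediate consequence of concatenating adjacent guesses of the same type---so your from-the-definitions verification is in the intended spirit, and your reverse direction (forwarding the outer guess in full, using $n^{b_1}\ge t_1(n)$) is the right idea. The genuine problem is the step in the forward containment where you bound the merged selection by $O(n^{b_1})+O(n^{b_2})=O(n^{b_2})$ ``using that $b_1\le b_2$ without loss of generality.'' That is not without loss of generality: the proposition imposes no relation between $b_1$ and $b_2$, and nothing in your argument reduces the case $b_1>b_2$ to the case $b_1\le b_2$. Under your bookkeeping---where the input and the $\sigma_1$-selected bits are all retained and handed onward to the final ${\cal C}$-machine---the case $b_1>b_2$ genuinely breaks: the nested class would feed $\Theta(n^{b_1})$ guess-derived bits into ${\cal C}$, while $(\exists~t_1+t_2)^{b_2}{\cal C}$ can feed only $O(n^{b_2})$, so the containment you want does not follow from your construction.

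The fix is that the semantics of the superscript is slightly different from what you assumed. As the paper's justification of the Speedup Rule makes explicit (the final $\DTS$ stage receives only the original input together with the pair $(C_i,C_{i+1})$, i.e.\ $O(n)$ bits, even though the preceding $\exists$ stage produced $n^{x+o(1)}$ bits), the selection procedure attached to a quantifier acts on its incoming data together with its fresh guess and outputs $O(n^{b})$ bits which constitute the \emph{entire} input of the next stage; this is also why $b_k$ is absent from the exponent in the Slowdown Rule. With that convention, in $(\exists~t_1(n))^{b_1}(\exists~t_2(n))^{b_2}{\cal C}$ the ${\cal C}$-machine receives only $O(n^{b_2})$ bits no matter what $b_1$ is, so the merged machine just guesses $(y_1,y_2)$ and runs the composition of the two selectors, whose output is automatically $O(n^{b_2})$ bits; no comparison of $b_1$ with $b_2$ is needed, and $b_1$ rightly disappears from the merged class. (Symmetrically, in the reverse inclusion the outer selector must forward the original input as well as $y_1$, which uses $b_1\ge 1$ in addition to $n^{b_1}\ge t_1(n)$; this holds in every use the paper makes of the notation. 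Your worry about $t_2$ and $n^{b_2}$ being evaluated at the inner machine's input length is moot, since all bounds in this notation are measured in the original input length $n$.)
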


Theorem~\ref{1stspeed} can now be stated more generally:

\begin{theorem}[Speedup Rule]\label{2ndspeed} For all $x$ such that $n \leq n^x \leq t(n)$, \[\DTS[t(n)] \isin (\exists~n^{x+o(1)})^{x}(\forall~\log x)^{1}\DTS[t(n)/n^x].\] The theorem also holds when we interchange $\forall$ and $\exists$.
\end{theorem}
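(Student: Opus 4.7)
The plan is to generalize the proof of Theorem~\ref{1stspeed}: rather than guessing $t(n)^{1/2}$ snapshots and verifying $t(n)^{1/2}$ steps between them, I will guess $n^x$ snapshots of the computation and verify a segment of length $t(n)/n^x$ between each consecutive pair. The $x=1/2$ case with $t(n)=t$ recovers Kannan's original simulation, so the new content is purely in the bookkeeping under the $(\exists~\cdot)^{(\cdot)}(\forall~\cdot)^{(\cdot)}\DTS[\cdot]$ formalism just introduced.

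More concretely, given $A \in \DTS[t(n)]$ deciding $L$, the simulating machine $B(w)$ on input $w$ of length $n$ first existentially writes $n^x + 1$ configurations $C_0, C_1, \ldots, C_{n^x}$ of $A(w)$. Because $A$ uses $n^{o(1)}$ workspace, each configuration fits in $n^{o(1)}$ bits, so the total existential guess has length $n^{x+o(1)}$. We syntactically require $C_0$ to be the initial configuration and $C_{n^x}$ to be an accepting configuration (this is a local check on the guess). Next, $B$ universally guesses an index $i \in \{0,1,\ldots,n^x-1\}$, which is $O(\log n^x) = O(x\log n)$ bits, and then selects the input $w$ together with $C_i$ and $C_{i+1}$ from the existentially guessed string; the selection is linear-time and logspace, and the selected data has size $n + O(n^{o(1)}) = O(n)$, fitting the outer selection parameter $1$. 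Finally, the inner $\DTS[t(n)/n^x]$ routine deterministically simulates $A(w)$ for $t(n)/n^x$ steps starting from $C_i$ and accepts iff the resulting configuration equals $C_{i+1}$; this uses $n^{o(1)}$ workspace since $A$ does, so it indeed belongs to $\DTS[t(n)/n^x]$.

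Correctness is standard: $w \in L$ iff some valid chain $C_0,\ldots,C_{n^x}$ of $A(w)$-configurations exists ending in an accepting state, which is exactly the condition that some existential guess causes the universal check on every $i$ to succeed. For the interchanged version with $\forall$ first and $\exists$ second, I would invoke the fact that $\DTS$ is closed under complement; applying the above simulation to an algorithm for $\overline{L}$ and dualizing swaps the two quantifier blocks while preserving all resource parameters.

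The main obstacle is not the combinatorial idea (it is essentially Kannan's) but getting the parameters to line up exactly with the new notation: one must verify that the $n^{o(1)}$ slack inside $n^{x+o(1)}$ really absorbs the per-configuration encoding cost, that $b=x$ suffices for the existential selection (we only ever read out two configurations, of total size $n^{o(1)} \leq n^x$), that $b=1$ suffices for the universal selection, and that the index-driven readout is computable in the allotted linear time and logspace so as not to disturb the $\DTS[t(n)/n^x]$ inner bound. Once these routine accountings are in place, Proposition~\ref{combine} lets one chain this rule repeatedly to obtain the more elaborate simulations used downstream.
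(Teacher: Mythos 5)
Your proof is correct and is essentially the paper's own argument: guess $n^x+1$ configurations of $n^{o(1)}$ bits each (total guess $n^{x+o(1)}$), universally pick an index, and feed the original input together with the pair $(C_i,C_{i+1})$ --- size $O(n)$, hence selection parameter $1$ --- to a $\DTS[t(n)/n^x]$ verifier that simulates the original machine for $t(n)/n^x$ steps. The paper dispatches the interchanged-quantifier case as the ``analogous'' direct $\Pi_2$-style simulation, while you obtain it via closure of $\DTS$ under complement; these are equivalent routine variants, not a substantive difference.
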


Theorem~\ref{2ndspeed} holds because we can just guess $n^b+1$ configurations (instead of $t(n)^{1/2}+1$ as before), universally pick $i \in \{0,\ldots,n^b\}$, and the input to the final $\DTS$ computation will simply be the original input along with the pair of configurations $(C_i,C_{i+1})$, which has size $n^{o(1)}$. Hence we have $n^{b+o(1)}$ in the $\exists$-quantifier, and $O(n)$ bits of input to the final $\DTS$ class.

Our new notation also lets us to state the ``slow-down theorem'' in a more precise way.

\begin{theorem}[Slowdown Rule]\label{2ndslow} If SAT is in $\DTS[n^c]$ then for all $a_1,b_1$, $\ldots$, $a_k,b_k$, $a_{k+1} \geq 1$, and $Q_i \in \{\exists,\forall\}$, the class $(Q_1~n^{a_1})^{b_1}\cdots (Q_{k-1}~n^{a_{k-1}})^{b_{k-1}}(Q_k~n^{a_k})^{b_k}\DTS[n^{a_{k+1}}]$ is contained in the class $(Q_1~n^{a_1})^{b_1}\cdots (Q_{k-1}~n^{a_{k-1}})^{b_{k-1}}\DTS[n^{c \cdot \max\{b_{k-1},a_{k},a_{k+1}\}+o(1)}]$.
\end{theorem}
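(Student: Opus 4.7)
The plan is to peel off only the innermost quantifier $(Q_k\ n^{a_k})^{b_k}$ together with the subsequent $\DTS[n^{a_{k+1}}]$ computation, and absorb both into a single deterministic small-space computation. The outer tower $(Q_1\ n^{a_1})^{b_1}\cdots(Q_{k-1}\ n^{a_{k-1}})^{b_{k-1}}$ stays untouched and simply composes with the new deterministic inner class, yielding the stated containment.

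First, I would view the inner class $(Q_k\ n^{a_k})^{b_k}\DTS[n^{a_{k+1}}]$ as a language over inputs consisting of the original $n$-bit string together with the $O(n^{b_{k-1}})$ bits selected out by the $(Q_{k-1})$-level. A direct simulation places this inner language inside $\NTIME[n^{M+o(1)}]$ when $Q_k=\exists$, and inside $\coNTIME[n^{M+o(1)}]$ when $Q_k=\forall$, where $M=\max\{b_{k-1},a_k,a_{k+1}\}$: roughly $n^{b_{k-1}}$ time is needed merely to read the available input, $n^{a_k}$ time to guess (or universally range over) the $Q_k$-witness, and $n^{a_{k+1}}$ time to execute the $\DTS$ phase.

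Next, I would invoke Theorem~\ref{SATred}: after padding the input to length $n^{M+o(1)}$, the inner $\NTIME$ (or $\coNTIME$) language reduces to SAT via a reduction producing a formula of size $n^{M+o(1)}$ whose individual bits are computable in $\poly(\log n)$ time. Under the hypothesis SAT $\in \DTS[n^c]$, we can therefore decide the resulting SAT instance in $\DTS[(n^{M+o(1)})^c]=\DTS[n^{cM+o(1)}]$. Since $\DTS$ classes are closed under complement (flip the answer of the same small-space machine), the co-nondeterministic case reduces to the nondeterministic one. Substituting this $\DTS[n^{cM+o(1)}]$ simulation for the entire inner class, and recomposing with the outer quantifier tower, produces exactly the class on the right-hand side of the theorem.

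The step I expect to be the main obstacle is the careful bookkeeping showing that the $b_{k-1}$ term in the max really does dominate the input-reading cost at the $k$-th level, and that the $n^{o(1)}$ space budget of the $\DTS$ simulation survives intact through the Cook--Levin reduction. The latter works precisely because Theorem~\ref{SATred} produces bits of the formula in $\poly(\log n)$ time each, so the $\DTS[n^c]$ machine for SAT can query the reduction on the fly without ever storing the full formula, keeping the whole simulation within $n^{o(1)}$ space as required by the $\DTS$ classes in the conclusion.
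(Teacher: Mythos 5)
Your proposal is correct and follows essentially the same route as the paper: absorb the selected input plus the innermost quantifier and $\DTS$ phase into $\NTIME[n^{\max\{b_{k-1},a_k,a_{k+1}\}}]$ or $\coNTIME[\cdot]$, then use Theorem~\ref{SATred} with padding, the hypothesis SAT $\in \DTS[n^c]$, and closure of $\DTS$ under complement to replace that inner class by $\DTS[n^{c\cdot\max\{b_{k-1},a_k,a_{k+1}\}+o(1)}]$ under the unchanged outer quantifier tower. Your added remark about querying the reduction's bits on the fly to preserve the $n^{o(1)}$ space bound is exactly the detail the paper leaves implicit.
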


Again, the result holds by Theorem~\ref{SATred} and a standard padding argument. In particular, $^{b_{k-1}}(Q_k~n^{a_k})^{b_k}\DTS[n^{a_{k+1}}]$ is contained in either $\NTIME[n^{\max\{b_{k-1},a_{k},a_{k+1}\}}]$ or $\coNTIME[n^{\max\{b_{k-1},a_{k},a_{k+1}\}}]$, and both of these are in $\DTS[n^{c \cdot \max\{b_{k-1},a_{k},a_{k+1}\}+o(1)}]$. We are now ready to prove a stronger time lower bound for SAT.

\begin{theorem}\label{1.6} SAT cannot be solved by an algorithm running in $n^{1.6}$ time and $n^{o(1)}$ space.\end{theorem}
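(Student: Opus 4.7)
My plan is to assume for contradiction that SAT $\in \DTS[n^c]$ with $c \leq 1.6$, and then generalize the $\sqrt{2}$-argument preceding the theorem by iterating the Speedup Rule $k$ times before collapsing back via successive slowdowns. The earlier argument applied Theorem~\ref{2ndspeed} exactly once, inside the chain $\NT[n^2] \isin \DTS[n^{2c}] \isin \Sig[2]\T[n^c] \isin \NT[n^{c^2}]$; iterating the speedup step further should force a tighter contradiction.

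Starting from $\NT[n^a]$ for a parameter $a > 1$, Theorem~\ref{SATred} together with padding gives $\NT[n^a] \isin \DTS[n^{ac}]$. I then apply Theorem~\ref{2ndspeed} iteratively $k$ times with parameters $x_1, x_2, \ldots, x_k$, alternating which of $\exists$ and $\forall$ leads at each step via the interchange freedom. After each speedup, the trivial $(\forall \log x_i)^{1}$ (respectively $(\exists \log x_i)^{1}$) quantifier produced by the prior step is absorbed into the new outer quantifier of matching type by Proposition~\ref{combine}. The net result is
\[\DTS[n^{ac}] \isin (\exists n^{x_1+o(1)})^{x_1}(\forall n^{x_2+o(1)})^{x_2}\cdots(Q_k n^{x_k+o(1)})^{x_k}(Q_{k+1}n^{o(1)})^{1}\DTS[n^{ac-s}],\]
with $s = x_1 + \cdots + x_k$. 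Next I peel off the quantifiers innermost-to-outermost using the Slowdown Rule (Theorem~\ref{2ndslow}). A straightforward induction, using $c > 1$ at each stage to drop the lower-index $b_{k-1}$ and $a_k$ terms from each max, shows that after $k$ slowdowns the class becomes $(\exists n^{x_1+o(1)})^{x_1}\DTS[n^M] \isin \NT[n^M]$, where
\[M \;=\; \max\bigl\{cx_1,\; c^2 x_2,\; c^3 x_3,\;\ldots,\; c^k x_k,\; c^k(ac-s)\bigr\}.\]

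The remaining step is to choose $x_1,\ldots,x_k$ so that $M<a$, which contradicts the nondeterministic time hierarchy~\cite{Cook72}. Balancing all $k+1$ terms of $M$ equal forces $x_j = x_1/c^{j-1}$ and $x_k = ac-s$, and solving yields
\[\frac{M}{a} \;=\; \frac{c^{k+1}}{2 + c + c^2 + \cdots + c^{k-1}}.\]
The argument succeeds exactly when $c^{k+1} < 2 + c + \cdots + c^{k-1}$. At $c = 1.6$ this inequality fails for $k \leq 4$ but holds for $k = 5$, since $1.6^{6} = 16.777\ldots < 16.810\ldots = 2 + 1.6 + 1.6^2 + 1.6^3 + 1.6^4$, so taking $k = 5$ finishes the proof (picking $a$ large enough that each $x_i \geq 1$ and the additive $o(1)$ slack from the $k$ slowdowns is absorbed). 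I expect the principal obstacle to be the bookkeeping that justifies the clean form of $M$ after the slowdown cascade; everything else reduces to algebra and a one-variable balancing. As $k \to \infty$ the threshold approaches $\phi = (1+\sqrt{5})/2$, consistent with the stronger golden-ratio bound of~\cite{FLvMV}.
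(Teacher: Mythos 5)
Your argument is correct, but it is a genuinely different derivation from the one in the paper, even though both live in the same Speedup/Slowdown proof system. The paper's proof is a specific, computer-discovered interleaving of seven rule applications (slowdown, two nested speedups, two slowdowns, another speedup, a final slowdown), starting from $\NT[n^{c/2+2/c}]$ and ending in $\NT[n^{c^3/2+o(1)}]$, which contradicts the nondeterministic time hierarchy whenever $c < \sqrt{(1+\sqrt{17})/2} \approx 1.6004$. You instead generalize the $\sqrt{2}$ argument uniformly: stack $k$ alternating speedups and then collapse with $k$ slowdowns; your bookkeeping of the maxima is right (at each slowdown the guess exponent of the removed quantifier is dominated because the inner exponent already carries an extra factor of $c$, while the input exponent of the level above survives and is exactly what produces the $c^j x_j$ entries), so $M/a = c^{k+1}/(2+c+\cdots+c^{k-1})$ is correct, and $k=5$ indeed beats $1.6$ (threshold $\approx 1.6009$, using roughly $2k+1=11$ rule applications versus the paper's seven). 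What each buys: your uniform pattern is essentially the iterated argument of \cite{FLvMV} and its limit as $k\to\infty$ is the golden ratio $\phi$, so it can never get past $\approx 1.618$, whereas the paper's interleaved style is the one the LP search exploits and is the gateway to the stronger $2\cos(\pi/7)$ bound of \cite{Will}; on the other hand your proof needs no search and gives a clean closed-form threshold for every $k$. Two small points to tighten: since the contradiction must hold for every $c \le 1.6$, either observe that SAT $\in \DTS[n^c]$ with $c \le 1.6$ trivially gives SAT $\in \DTS[n^{1.6}]$ and argue at $c=1.6$, or check the sign of $c^{k+1}-(2+c+\cdots+c^{k-1})$ on all of $[1,1.6]$; and your phrase about dropping ``the $b_{k-1}$ and $a_k$ terms'' from each max is misstated---only the guess term is dominated, the $b$ terms are retained and become the $c^j x_j$ entries---though the formula for $M$ you actually use is the correct outcome.
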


\begin{proof} Suppose SAT $\in \DTS[n^c]$ where $\sqrt{2} \leq c \leq 1.6$. By Theorem~\ref{SATred}, $\NTIME[n] \isin \DTS[n^{c+o(1)}]$. We can derive $\NT[n^{c/2 + 2/c}] \isin \NTIME[n^{c^3/2+o(1)}]$, ignoring $o(1)$ factors for simplicity: \begin{eqnarray*}\NTIME[n^{c/2+2/c}] &\isin& \DTS[n^{c^2/2 + 2}]~~~\text{(Slowdown)}\\
&\isin &(\exists~n^{c^2/2})^{c^2/2}(\forall~\log n)^1\DTS[n^2]~~~\text{(Speedup, with $x = c^2/2$)}\\
& \isin & (\exists~n^{c^2/2})^{c^2/2}(\forall~\log n)^1(\forall~n)^1(\exists~\log n)^1\DTS[n]~~~\text{(Speedup, with $x = 1$)}\\
& = & (\exists~n^{c^2/2})^{c^2/2}(\forall~n)^1(\exists~\log n)^1\DTS[n]~~~\text{(Proposition~\ref{combine})}\\
& \isin & (\exists~n^{c^2/2})^{c^2/2}(\forall~n)^1\DTS[n^c]~~~\text{(Slowdown)}\\
& \isin & (\exists~n^{c^2/2})^{c^2/2}\DTS[n^{c^2}]~~~\text{(Slowdown)}\\
& \isin & (\exists~n^{c^2/2})(\exists~n^{c^2/2})^{c^2/2}(\forall~\log n)^{c^2/2}\DTS[n^{c^2/2}]~~~\text{(Speedup, $x = c^2/2$)}\\
& = & (\exists~n^{c^2/2})^{c^2/2}(\forall~\log n)^{c^2/2}\DTS[n^{c^2/2}]~~~\text{(Proposition~\ref{combine})}\\
& \isin & (\exists~n^{c^2/2})^{c^2/2}\DTS[n^{c^3/2}]~~~\text{(Slowdown)}\\
& \isin & \NTIME[n^{c^3/2}].\end{eqnarray*}

When $c/2 + 2/c > c^3/2$ (which happens for $c < \sqrt{\frac{1+\sqrt{17}}{2}} \approx 1.6004$) we have a contradiction to the nondeterministic time hierarchy. \end{proof}

The best known SAT time lower bound (with $n^{o(1)}$ space) is from an earlier paper of ours~\cite{Will}, and it achieves $n^{2 \cos(\pi/7)} \geq n^{1.8}$. It is a fairly elaborate inductive proof that builds on the same ideas. A natural question is if we can do even better than this. It is clear that we have a very specific type of proof system on our hands; it is also powerful, in that all known time-space lower bounds for SAT (and QBF) on random access machines work over it.

There is nothing too complicated about the proof of Theorem~\ref{1.6}: we are just applying the Speedup and Slowdown Rules in clever ways. More interestingly, the proof was discovered by a computer program. Furthermore it is the best lower bound one can prove with only $7$ applications of Speedup and Slowdown Rules, and we know this because a computer program tried all the cases.

How did it try the cases? It would seem that the space of possibilities is too high: how could a computer try all possible expressions for the exponents? One can show that {\em once we have specified the sequence in which the Speedup and Slowdown Rules are applied, the task of finding the optimal lower bound argument can be formulated as a linear program}. This makes our job of finding good lower bound proofs much easier.

Let me sketch how a linear program can be constructed, for a fixed sequence of rules to apply. One can show that the sequence of rules completely determines the number of quantifiers in each class in the chain of inclusions of the proof. (There are actually two ways to apply the Speedup Rule, one where we introduce a $\Sig[2] $ computation and the other a $\Pie[2] $ computation, but we can prove that one of these applications is always superior.) Suppose we have a sequence of inclusions such as those in the proof of Theorem~\ref{1.6}, but all exponents in the polynomials are replaced with variables. So for example, the second inclusion (or ``line'') in the proof of Theorem~\ref{1.6} becomes \[(\exists~n^{a_{2,3}})^{b_{2,2}}(\forall~n^{a_{2,2}})^{b_{2,1}}\DTS[n^{a_{2,1}}],\] the initial class $\NTIME[n^{c/2+2/c}]$ is replaced with $\NT[n^{a_{0,1}}]$, and the final class $\NTIME[n^{c^3/2}]$ becomes $\NT[n^{a_{8,1}}]$. In general, we replace the exponents in the $i$th line with variables $a_{i,j}$, $b_{i,j}$. Now we want to write a linear program in terms of these variables that expresses the applications of the two rules and captures the fact that we want a contradiction. To do the latter is very easy: we simply require $a_{8,1} < a_{0,1}$, or $a_{8,1} \leq a_{0,1}-\eps$ for some $\eps > 0$. To express a Speedup Rule on the $i$th line, we introduce a parameter $x_i \geq 0$ and include the following inequalities: \[\begin{array}{c} a_{i,1} \geq 1,~ a_{i,1} \geq a_{i-1,1} - x_i,~b_{i,1} = b_{i-1,1},~a_{i,2} = 1,~b_{i,2} \geq x_i,~b_{i,2} \geq b_{i-1,1},~ a_{i,3} \geq a_{i-1,2},\\~a_{i,3} \geq x_i,~(\forall ~k:~4 \leq k \leq m)~a_{i,k} = a_{i-1,k-1}, (\forall ~k:~4 \leq k \leq m-1)~b_{i,k} = b_{i-1,k-1}.\end{array}\] \indent Intuitively, these constraints express that the class $(Q_m~n^{a_m})^{b_{m-1}}\cdots^{b_2}(Q_2~ n^{a_2})^{b_1}\DTS[n^{a_1}]$ on the $(i-1)$th line is replaced with $(Q_m~n^{a_m})^{b_{m-1}}\cdots^{b_2}(Q_2~ n^{\max\{a_2,x_i\}})^{\max\{x_i,b_1\}}(Q_1~n)^{b_1}\DTS[n^{\max\{a_1-x_i,1\}}]$ on the $i$th line, where $Q_1$ is the quantifier opposite to $Q_2$. (One can check that this indeed simulates the Speedup Rule faithfully.) We can express the Slowdown Rule in a similar way, by treating the desired lower bound exponent $c$ as a constant. Note if we minimize the sum of $a_{i,j}+b_{i,j}$ over all $i,j$, then the above inequalities faithfully simulate $\max$.

Given that we can take a sequence of rules and turn it into an LP that can then be solved, what sequences are good to try? The number of sequences to search can be reduced by establishing several properties of the proof system that hold without loss of generality. For example, we may always start with $\DTS[n^k]$ for some $k$, and if we derive $\DTS[n^k] \subseteq \DTS[n^{k-\eps}]$ then we have a contradiction, and every proof that works otherwise can be rewritten to work like this. There are several simplifications of this type and while their proofs are not very enlightening, as a whole they let us identify the relevant parts of proofs. They also help us prove {\em limitations} on the proof system.

The chart on the next page gives a graph of experimental results from a search for short proofs of time-space lower bounds for SAT. The $x$-coordinate is the number of lines in a proof (the number of Speedup/Slowdown applications) and the $y$-coordinate is the exponent of the best lower bound attained with that number of lines.

\begin{figure}

\input{SAT2.tex}

\end{figure}

Up to 25 lines, the search was completely exhaustive. Beyond that, I used a heuristic search that takes a queue of best-found proofs for small lengths and tries to locally improve them by inserting new rule applications. When a better lower bound is found, the new proof is added to the queue. This heuristic search was run up to about $50+$ lines. Now, all of the best proofs found up to $50+$ lines have a certain pattern to them, resembling the structure of the $2 \cos(\pi/7)$ lower bound. Restricting the search to work only within this pattern, we can get annotations for $70+$ lines which still exhibit the pattern. Checking a $383$-line proof of similar form, the lower bound attained was $n^{1.8017}$, very close to $2 \cos(\pi/7) \approx 1.8019$. These experimental results lead us to:

\begin{conjecture} The best time lower bound for SAT (in $n^{o(1)}$ space) that can be proved with the above proof system is the $n^{2 \cos(\pi/7)}$ bound of~\cite{Will}.
\end{conjecture}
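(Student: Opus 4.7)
The plan is to prove the conjecture by LP duality: for every admissible proof sequence, exhibit a dual certificate of infeasibility for the associated LP at $c = c^\star := 2\cos(\pi/7)$. Concretely, the Speedup and Slowdown rules are linear transformations on the exponent vector describing each line of a proof, so for any fixed sequence of $\ell$ rule applications the search for a contradictory proof reduces to feasibility of a linear program in the variables $a_{i,j}, b_{i,j}, x_i$, with the contradictory final line expressed as $a_{\ell,1} < a_{0,1}$. To show that no such LP is feasible for $c \geq c^\star$, I would construct a potential function $\Phi$ on complexity classes of the form $(Q_1 n^{a_1})^{b_1}\cdots(Q_k n^{a_k})^{b_k}\DTS[n^{a_{k+1}}]$ with the properties (i) $\Phi(\NT[n^a]) = a$, and (ii) $\Phi$ is non-decreasing under every application of Speedup and Slowdown whenever $c \geq c^\star$. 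Then any derivation that starts at $\NT[n^{a_0}]$ must end at a class of potential at least $a_0$, forbidding $\NT[n^{a_0-\varepsilon}]$ as the final line and so precluding contradiction with the nondeterministic time hierarchy.

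Before constructing $\Phi$, I would first normalize the proof system to remove freedoms that only inflate the search space. Using the excerpt's observation that one of the two Speedup variants always dominates, together with Proposition~\ref{combine} to merge adjacent like quantifiers, I would argue that WLOG a proof alternates Speedup and Slowdown applications in a canonical way: Speedup always targets the innermost $\DTS$ and inserts a fresh $\forall$/$\exists$ pair, while Slowdown always collapses the outermost quantifier via Theorem~\ref{2ndslow}. This leaves a proof whose syntactic ``shape'' at every line is determined by the line number alone, so the LP's variables are exactly the exponents $a_{i,j}, b_{i,j}$ and one speedup parameter $x_i$ per Speedup line. This normalization also matches the experimentally observed pattern in the best known proofs and suggests the correct ansatz for $\Phi$.

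For the potential, I would take as ansatz a linear form $\Phi = \sum_j \alpha_j a_{k+1-j} + \sum_j \beta_j b_{k+1-j}$ with coefficients $\alpha_j, \beta_j$ depending on $c$, chosen so that a full Speedup-then-Slowdown cycle acts on $(\alpha_j, \beta_j)$ as the identity when $c = c^\star$. Since $c^\star$ is a root of $y^3 - y^2 - 2y + 1 = 0$, the relevant transition matrix of the cycle should have $1$ as an eigenvalue at $c = c^\star$, and the left-eigenvector supplies the $\alpha, \beta$. The guess is further constrained by demanding $\Phi(\NT[n^a]) = a$ and by matching the tight $n^{2\cos(\pi/7)}$ proof of~\cite{Will}, from which the specific values of $\alpha_j, \beta_j$ can be read off. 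One then verifies directly from the rule definitions that $\Delta\Phi \geq 0$ under both Speedup and Slowdown, reducing to inequalities among polynomials in $c, x_i$ and the exponents; for $c \geq c^\star$ these reduce, using $c^3 \geq c^2 + 2c - 1$, to manifestly nonnegative expressions. Short proofs and boundary effects (initial/final lines) are disposed of by direct case analysis and padding.

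The main obstacle is exactly this last verification. Guessing the right finite-length linear potential is easy; proving $\Delta\Phi \geq 0$ is subtle because Speedup introduces a free parameter $x_i$ and Slowdown takes a max over three quantities, so the inequality must hold over an entire polyhedral region, not just at a single optimum. I expect to handle the max by splitting into three cases (one per argument of the $\max$) and, in each case, checking that $\Delta\Phi$ is a nonnegative linear combination of $c - c^\star$ and the nonnegativity constraints $x_i, a_{i,j}, b_{i,j} \geq 0$. If a purely linear $\Phi$ does not suffice, I would extend to a piecewise-linear potential indexed by the quantifier pattern of the line, at the cost of more cases but the same algebraic backbone; this is more honest about the inherent three-term structure of $c^\star$'s minimal polynomial. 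A matching upper bound on what the system proves is already given by~\cite{Will}, so once $\Phi$ is verified the conjecture follows.
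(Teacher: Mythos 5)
There is a genuine gap, and it starts with the status of the statement itself: this is a \emph{conjecture} in the paper, not a theorem. The paper offers only experimental evidence (exhaustive search of all rule sequences up to 25 lines, heuristic search to $50+$ lines, and a 383-line proof in the observed pattern reaching exponent $1.8017$), and the author states explicitly that he does not know how to prove it; the only formally proved limitation is the weaker Theorem stating that an $n^2$ lower bound is unprovable in the system, shown by a minimal-counterexample argument that removes a subsequence of rules while keeping the LP feasible. Your proposal is therefore not ``the paper's proof by another route'' --- it is a research plan whose decisive step is precisely the open problem. The existence of a potential $\Phi$ that is monotone under both Speedup and Slowdown for all $c \geq 2\cos(\pi/7)$, over every admissible line shape and every choice of the free parameter $x_i$ and of the argument achieving the $\max$ in the Slowdown rule, is exactly what would need to be proved; you correctly identify it as ``the main obstacle'' but do not carry it out, so no proof has been given. (Your algebraic anchor is fine --- $2\cos(\pi/7)$ is indeed a root of $y^3 - y^2 - 2y + 1$ --- but having the right eigenvalue equation is far from verifying the required family of inequalities.)

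A second, concrete problem is your normalization step. You claim that WLOG a proof alternates Speedup and Slowdown in a canonical way, so that the syntactic shape of each line is determined by the line number alone. The paper's own optimal short proof (Theorem~\ref{1.6}, found by the computer search and optimal among $7$-application proofs) applies two Speedups consecutively and two Slowdowns consecutively, interleaved with quantifier merges via Proposition~\ref{combine}; the empirically best long proofs likewise have a nontrivial pattern rather than strict alternation. So the normalization as stated is false, and without it your LP/potential analysis no longer ranges over all proofs --- which matters, because the paper emphasizes (in contrast to~\cite{TSSW}) that no finite bound on proof length or structure is known, so any correct argument must genuinely handle arbitrary rule sequences. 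If you want to pursue this, the paper's suggested direction is closer to your ``removable structure'' instinct than to a single linear potential: show that some subsequence of rules must occur in any purported better-than-$2\cos(\pi/7)$ proof and can be excised without weakening the conclusion, the difficulty being that no one currently knows how to do this.
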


Given the scale at which the conjecture has been verified, I am fairly confident in its truth although I do not know how to prove it. Unlike~\cite{TSSW}, we do not know how to place a finite upper bound on all the parameters, namely the lengths of proofs. The conjecture is indeed surprising, if true. The general sentiment among researchers I have talked to (and anonymous referees from the past) was that a quadratic time lower bound (or more precisely, $n^{2-\eps}$ for all $\eps > 0$) should be possible with the ingredients we already have. We can show formally that a better lower bound than this cannot be established with the current approach.

\begin{theorem} In the above proof system, one cannot prove that SAT requires $n^2$ time with $n^{o(1)}$ space algorithms.
\end{theorem}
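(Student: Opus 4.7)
The plan is to define a potential $\Phi$ on the class expressions handled by the proof system, show that with $c=2$ every rule preserves or increases $\Phi$, and use the fact that $\Phi(\NT[n^a])=2a$ is strictly increasing in $a$.  Any derivation $\NT[n^{a_0}]\isin\cdots\isin\NT[n^{a_N}]$ then forces $a_N\geq a_0$, so no nondeterministic time hierarchy violation can be produced, and hence no $n^2$ lower bound on SAT is provable in the system.  For $\Phi$ I take the ``slowdown collapse'': for $\mathcal{X}=(Q_1 n^{a_1})^{b_1}\cdots(Q_k n^{a_k})^{b_k}\DTS[n^{a_{k+1}}]$, iterating Theorem~\ref{2ndslow} from the innermost quantifier outward lands on some $\DTS[n^{\Phi(\mathcal{X})}]$, where by induction
\[\Phi(\mathcal{X}) \;=\; \max\bigl(\{c^i a_i,\ c^i b_{i-1} : 1\leq i\leq k\}\cup\{c^k a_{k+1}\}\bigr),\qquad c=2,\ b_0=1,\]
extended to the remaining classes by $\Phi(\DTS[n^a])=a$ and $\Phi(\NT[n^a])=ca$ (consistent with the containment $\NT[n^a]\isin\DTS[n^{ca}]$ from Theorem~\ref{SATred}).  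The Slowdown Rule is tight for $\Phi$ by construction, and the ``exit'' containments $\DTS[n^e]\isin\NT[n^e]$ and $(\exists n^a)^b\DTS[n^e]\isin\NT[n^{\max\{a,e\}}]$ used to return to the $\NT$ layer are immediate inequalities in $\Phi$.

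All of the content lives in the Speedup Rule and its possible follow-up by Combine (Proposition~\ref{combine}).  Applying Speedup to the innermost $\DTS[n^a]$ in a class with $k$ outer quantifiers (so the $\DTS$-contribution to $\Phi$ is $c^k a$) replaces the $\DTS$ by $(\exists n^x)^x(\forall\log n)^1\DTS[n^{a-x}]$, or by the symmetric $\forall\exists$ variant.  Without a Combine, the new chunk contributes at least $c^{k+2}\max\{x,a-x\}\geq c^k\cdot c^2 a/2 = 2c^k a$ under $c=2$, so $\Phi$ strictly grows.  When a Combine does apply---i.e., when the variant of Speedup chosen matches the outer same-type quantifier---those two quantifiers merge into a single $(\exists n^{\max\{a_{\mathrm{outer}},x\}})^{x}$, the parameter $b_{\mathrm{outer}}$ is dropped, and the surviving new contributions become $c^k\max\{a_{\mathrm{outer}},x\}$, $c^{k+1}x$, and $c^{k+1}(a-x)$.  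Their maximum is bounded below by $c^{k+1}\max\{x,a-x\}\geq c^{k+1}\cdot a/2 = c^k a$ for $c=2$, with equality at $x=a/2$, so Speedup+Combine preserves $\Phi$ exactly.  For $c<2$ the same expression falls strictly below $c^k a$ at $x=a/2$, which is precisely the slack used in~\cite{FLvMV,Will}.

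The main obstacle is the Combine case.  One must check that dropping $b_{\mathrm{outer}}$ never destroys more $\Phi$-mass than the inserted $(\exists)(\forall)$ pair restores, uniformly over every admissible $x\in[1,a-1]$ and over both regimes $a_{\mathrm{outer}}\geq a$ (where $c^k\max\{a_{\mathrm{outer}},x\}$ already dominates the pre-Speedup $c^k a_{\mathrm{outer}}$) and $a_{\mathrm{outer}}<a$ (where one needs $c^{k+1}\max\{x,a-x\}\geq c^k a$), and symmetrically for $\forall\forall$ combines.  All of these reduce to the single inequality $c^2\max\{x,a-x\}\geq a$, tight at $c=2$ and $x=a/2$.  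Once this is verified, $\Phi$ is monotone along every admissible chain, giving $2a_0=\Phi(\NT[n^{a_0}])\leq\Phi(\NT[n^{a_N}])=2a_N$ and hence $a_N\geq a_0$.  The tightness at $c=2$ is precisely what prevents the argument from crossing the $n^2$ threshold while still allowing the known $n^{2\cos(\pi/7)-\eps}$ bound to go through.
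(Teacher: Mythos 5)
You take a genuinely different route from the paper, and the route is viable. The paper proves this theorem by minimal counterexample: assuming some sequence of rule applications yields an LP feasible at $c=2$, it takes a shortest such proof and shows a subsequence of rules can be deleted with the LP remaining feasible with the same parameters, giving an infinite descent. You instead build an explicit invariant: the full ``slowdown collapse'' exponent $\Phi$, with Slowdown exactly tight for $\Phi$ by construction, and Speedup (taken together with its Combine) unable to decrease $\Phi$ once $c\geq 2$, so no chain can end in a hierarchy violation. Your formula for $\Phi$ and the two Speedup cases check out, and they remain correct against the paper's own LP rendering of the Speedup Rule (there the new innermost quantifier carries superscript $b_k$ and the merged superscript is $\max\{x,b_k\}$, which only increases $\Phi$ of the new line and hence only helps). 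Your tight case $(c,x)=(2,a/2)$ is exactly the balanced speedup, which gives a clean quantitative explanation of why $2$ is the barrier, and your argument applies uniformly to all $c\geq 2$; the paper's removable-subsequence machinery, by contrast, is the tool it proposes for attacking the stronger $2\cos(\pi/7)$ conjecture, where no single monotone potential is known. Incidentally, the ``main obstacle'' you worry about is vacuous: $b_k$ never contributes a term to $\Phi$ (the Slowdown Rule of Theorem~\ref{2ndslow} uses $b_{k-1}$, not $b_k$), so dropping $b_{\mathrm{outer}}$ destroys no $\Phi$-mass at all.

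Two points need repair. First, your closing summary misstates the binding inequality: in the Speedup-plus-Combine case it is $c\max\{x,a-x\}\geq a$ (your own displayed bound $c^{k+1}\max\{x,a-x\}\geq c^k a$), not $c^2\max\{x,a-x\}\geq a$; the latter is not tight at $(c,x)=(2,a/2)$ and would wrongly suggest monotonicity already at $c=\sqrt{2}$. Second, $\Phi$ is \emph{not} monotone under a free-standing application of Proposition~\ref{combine}: merging two prefix quantifiers lowers every inner multiplier from $c^i$ to $c^{i-1}$ and can strictly decrease $\Phi$ (this happens between consecutive lines of the proof of Theorem~\ref{1.6}). So you must, as you implicitly do, treat Speedup-then-Combine as a single move. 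That matches the LP formalization of the proof system, where the merge is built into the Speedup constraints because the merging variant is always superior; but if Combine is admitted as a separate step of the system, you need the (easy) normalization lemma that any derivation can be rewritten, without changing its first and last lines, so that each Combine immediately follows the Speedup that created the duplicated quantifier. With those two patches the potential argument goes through.
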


This theorem is proven by minimal counterexample: we take a minimum proof of a quadratic lower bound, and show that there is a subsequence of rules that can be removed such that the underlying LP remains feasible with the same parameters as before. So one possible strategy for proving the conjecture is to find some subsequence of rules that must arise in any optimal proof, and show that if one assumes $c > 2 \cos(\pi/7)$ then this subsequence can be removed from the proof without weakening it. However this strategy appears to be difficult to carry out.

\section{Finding Small Circuits}

In this last section, I will speculate about a computational approach to understanding Boolean circuit complexity. The following has been a joint effort with Maverick Woo.

Our knowledge of Boolean circuit complexity is quite poor. (For concreteness, let us concentrate on circuits comprised of AND, OR, and NOT gates.) We do not know how to prove strong circuit lower bounds for problems in {\sf P}; the best known is $5n$~\cite{LR01,IM02}. One good reason why we don't know much about the true power of circuits is that we don't have many examples of minimum circuits. We don't know, for example, what an optimal circuit for $3 \times 3$ Boolean matrix multiplication looks like.

It is possible that we could make progress in understanding circuits by cataloging the smallest circuits we know for basic functions, on small input sizes (such as $n=1,\ldots,10$). This suggestion makes more sense for some problems than others. For SAT, the circuit complexity can depend on the encoding of Boolean formulas; for matrix operations, the encoding is clear. Sloane and Plouffe first published~\cite{SP} and now maintain the Online Encyclopedia of Integer Sequences, an exhaustive catalog of interesting sequences that arise in mathematics and the sciences. Might we benefit from an Encyclopedia of Minimum Circuits? For example, what do the smallest Boolean circuits for 10 $\times$ 10 Boolean matrix multiplication look like? Are they regular in structure?  It is likely that the answers would give valuable insight into the complexity of the problem. The best algorithms we know of reduce the problem to matrix multiplication over a ring, which is then solved by a highly regular, recursive construction (such as Strassen's~\cite{Strassen}). Even if the cataloged circuits are not truly minimal but close to that, concrete examples for small inputs could be useful for theoreticians to mine for inspiration, or perhaps for computers to mine for patterns via machine learning techniques. The power of small examples cannot be underestimated.

How can we get small examples of minimum circuits? One potential approach is to reduce this task to the task of developing good solvers for quantified Boolean formulas, an area in AI that has seen much technical progress lately. We can pose the problem of finding a small circuit as a quantified Boolean formula (QBF), and feed the QBF to one of many recently developed QBF solvers. A QBF $\Phi_{s,n}$ for size-$s$ $n \times n$ matrix multiplication circuits can be stated roughly as: \[\Phi_{s,n} = (\exists~\text{circuit}~C~\text{of}~s~\text{gates}, 2n^2~\text{inputs}, n^2~\text{outputs})(\forall~n\times n~\text{matrices}~X,Y)[X\cdot Y = C(X,Y)],\] where the predicate can be easily encoded as a SAT instance. In our encoding, we allow the circuits to have unbounded fan-in. For simplicity, we searched for circuits made up of only {\tt NOR} gates.

Experiments with QBF solvers have not yet revealed significant new insight. So far, they have discovered one fact: {\em the optimal size circuit for $2 \times 2$ Boolean matrix multiplication is the obvious one}. Well, duh. What about the $3 \times 3$ case? This is already difficult! The {\tt sKizzo} QBF solver~\cite{skizzo} can prove that there is no circuit for $3 \times 3$ that has 10 gates, but nothing beyond that. Even when we restrict the gates to have fan-in two, the solver crashes on larger instances.

I do not see this limited progress as a substantial deterrent. On the practical side, QBF solvers have only seen serious scientific attention in the last several years, and huge developmental strides have already been made. On the theoretical side, if we look at matrix multiplication over a field, a better way to approach the problem would be to phrase the QBF for small circuits as something like a ``Merlin-Arthur formula'': that is, we guess the circuit to be used and verify that it computes the product by evaluating on random matrices. In that case, the instances should be much easier to solve. In general, we can find approximately minimum circuits for problems with polysize circuits in ${\sf ZPP}^{\sf NP}$~\cite{Bshouty}. By using a high-quality SAT solver in place of the ${\sf NP}$ oracle, the idea of building an approximate circuit encyclopedia does not seem too implausible. However some effort will be needed to adapt the results to work in practice.

I do believe that in the near future, the general problem of finding small minimal circuits for {\sf P} problems will be within the reach of practice. Analyzing these new gadgets should inject a fresh dose of ideas in the area of circuit complexity.




\end{document}